\documentclass[11pt]{article}

\usepackage[margin=1in]{geometry}

\usepackage{amsmath}
\usepackage{amssymb}
\usepackage{amsthm}
\usepackage{mathtools}

\usepackage{booktabs}
\usepackage{array}
\usepackage{multirow}

\usepackage{tikz}
\usetikzlibrary{positioning, arrows.meta, shapes.geometric, fit, calc}

\usepackage[colorlinks=true,linkcolor=blue,citecolor=blue,urlcolor=blue]{hyperref}

\usepackage{natbib}

\usepackage{enumitem}
\usepackage{microtype}

\usepackage{listings}
\newcommand{\govern}{\mathcal{G}}
\newcommand{\policy}{\mathcal{P}}
\newcommand{\ledger}{\mathcal{L}}
\newcommand{\runtime}{\mathcal{R}}

\newcommand{\realize}{\mathit{realize}}
\newcommand{\record}{\mathit{record}}

\newtheorem{theorem}{Theorem}[section]

\newtheorem{property}{Property}
\theoremstyle{definition}
\newtheorem{definition}[theorem]{Definition}
\theoremstyle{remark}

\begin{document}

\title{Intent-Driven Computing: A Computational Model\\for Governed Autonomous Systems}

\author{Alan L. McCann\\
\textit{Mashin, Inc.}\\
\texttt{research@mashin.live}}

\date{May 2026}

\maketitle

\begin{abstract}
Programming languages assume programs directly execute effects.
When autonomous systems generate behavior dynamically, this assumption
becomes problematic: there is no structural mediation point between
deciding to act and acting. We define \emph{intent-driven computing}:
a programming model where programs produce \emph{intents} (finite data
values describing proposed actions) rather than directly executing
effects. A governed runtime examines each intent against a decidable
policy language, records every decision in a tamper-evident ledger,
and only then realizes the effect. The language provides no alternative
path to effects. The model does not decide arbitrary behavioral
properties of programs (which Rice's theorem shows is impossible).
Instead, it constrains the language so that all effectful interaction
is reified as finite intent values, shifting governance from the
undecidable domain of program semantics to the decidable domain of
intent data. This yields emergent properties: event sourcing by
construction, governance simulation via intent replay, structural
audit completeness, and improved human comprehensibility. We specify
the model formally, implement it in a concrete language compiling to
the BEAM virtual machine, and verify key properties in Rocq (454
theorems, 36 modules, zero admitted lemmas). Property-based testing
(70,000+ random inputs, zero disagreements) validates that the
implementation matches the specification.
\end{abstract}

\section{Introduction}
\label{sec:intro}

Consider an AI agent deployed to process customer refund requests at a
financial services firm. The agent accesses customer email, a CRM
system, a payment processing API, and internal policy documents. On a
given day it processes 200 requests. The compliance team asks: for each
refund the agent issued, can we prove that the agent was authorized to
issue it, that it checked the policy, that it did not access records it
was not entitled to see, and that the complete decision trail is
independently verifiable?

With common agent frameworks, the answer is generally no unless the
framework is embedded in a separate enforcement substrate. The agent
runs in
Python or JavaScript. It has access to \texttt{requests.post()},
\texttt{subprocess.run()}, or equivalent ambient I/O capabilities.
Guardrails check model outputs but not programmatic effects. API keys
scope access but do not record why access was granted. Logs capture what
happened but cannot prove completeness: there is no guarantee that every
action produced a log entry, and the logs themselves are stored in
infrastructure the auditor must trust.

This paper argues that the problem is not insufficient tooling but an
incorrect abstraction level. Traditional computing conflates two
distinct concerns: the expression of desired actions and the realization
of those actions as effects. When a human writes every line of code,
this conflation is harmless. When autonomous systems generate behavior
dynamically, it becomes structurally problematic.

We propose \emph{intent-driven computing}: a computational model that
separates these concerns by construction. Programs produce
\emph{intents}, not effects. A governed runtime mediates every intent.
The separation is not a policy layer or middleware; it is a property of
the language semantics. Prior work has established the formal
foundations: the structural gap between expressiveness and governance
boundaries~\cite{mccann2026structural}, mechanized proofs of
governance properties~\cite{mccann2026mechanized}, effect-transparent
governance~\cite{mccann2026gcc}, and certified
purity~\cite{mccann2026purity}. This paper contributes the unifying
computational model and its emergent consequences.

\subsection{The Completeness Barrier}

Rice's theorem~\cite{rice1953classes} establishes that for any
non-trivial semantic property $P$ of programs in a Turing-complete
language, the set of programs satisfying $P$ is undecidable. ``This
agent only accesses approved APIs'' is such a property. ``This agent
stays within its refund limit'' is such a property. No behavioral
analysis can be simultaneously sound, complete, and decidable for
these properties over arbitrary programs.

Current AI governance approaches accommodate this barrier by accepting
incompleteness. Content filters~\cite{rebedea2023nemo} examine what
models say, not what programs do. Guardrails~\cite{bai2022constitutional}
wrap individual calls but cannot cover compositions. Runtime
monitors~\cite{leucker2009brief} observe effects after execution, when
damage may already be done. Managed platforms provide scoped permissions
but store audit data in vendor infrastructure, making independent
verification impossible.

We call this the \emph{accountability gap}: the distance between what
governance systems promise and what they can prove. The gap is not an
engineering shortcoming. It is a mathematical consequence of
representing governance as a behavioral property of programs in a
Turing-complete language.

\subsection{The Abstraction Shift}

Intent-driven computing addresses the accountability gap by changing
the abstraction level. Programs in an intent-driven language cannot
cause effects. They can only produce intents: finite, structured data
representing proposed actions. A governance interpreter examines each
intent against a decidable policy language, records the decision in a
tamper-evident ledger, and either realizes the effect or denies the
intent. No alternative effect path exists in the language semantics.

The model does not attempt to decide arbitrary behavioral properties
of programs (which Rice's theorem shows is impossible). Instead, it
constrains the language so that all effectful interaction is reified
as finite intent values. Governance decisions are made over those
values, using a decidable policy language, before realization. This
is a change in the governance domain, not a workaround for
undecidability.

\subsection{Contributions}

\begin{enumerate}[leftmargin=*]
  \item A formally specified programming model, presented as a
    small-step operational calculus with explicit governance mediation
    semantics, in which effectful computation is expressed as
    inspectable intent values and the only operational rule that
    realizes effects must first produce a governance decision and
    ledger event (\S\ref{sec:model}).

  \item A precise account of how this model avoids reducing governance
    to an undecidable behavioral property: enforcement operates on
    finite intent values under a decidable policy language, not on
    program behavior (\S\ref{sec:properties}).

  \item An analysis of how intent-driven computing changes the
    programming experience: writing, reasoning, testing, debugging,
    and evolving autonomous systems (\S\ref{sec:programming}).

  \item A concrete implementation with mechanized proofs (454 Rocq
    theorems, zero admitted), performance measurements, and a case
    study (\S\ref{sec:architecture}, \S\ref{sec:evaluation}).

  \item A comparative analysis situating the model within effect
    systems, capability-based security, event sourcing, monadic I/O,
    and desired-state systems (\S\ref{sec:related}).

  \item A historical argument that intent-driven computing follows the
    pattern of prior abstraction shifts in computing
    (\S\ref{sec:historical}).
\end{enumerate}

\section{The Intent-Driven Computational Model}
\label{sec:model}

\subsection{Definitions}

\begin{definition}[Intent]
\label{def:intent}
An \emph{intent} is a finite, structured data value $i = (\mathit{action},
\mathit{target}, \mathit{params}, \mathit{context})$ representing a
proposed effectful operation. Intents are values, not computations: they
can be inspected, compared, serialized, and decided upon without
executing them.
\end{definition}

\begin{definition}[Effect]
\label{def:effect}
An \emph{effect} is the realization of an intent by the runtime: the
actual execution of the proposed operation (e.g., sending an HTTP
request, writing a file, invoking an external service).
\end{definition}

\begin{definition}[Intent-Driven Program]
\label{def:program}
An \emph{intent-driven program} is a computation that produces a
sequence of intents $[i_1, i_2, \ldots, i_n]$ but cannot directly
execute effects. The program's expressiveness for pure computation is
unrestricted (Turing-complete). Its interaction with the external world
is mediated entirely through intent production.
\end{definition}

\begin{definition}[Governance Interpreter]
\label{def:governance}
A \emph{governance interpreter} $\govern$ is a function that, given a
policy set $\policy$, an intent $i$, and a runtime context $c$, produces
a decision $d \in \{\mathit{allow}, \mathit{deny}, \mathit{escalate}\}$
and a decision record $r$:
\[
  \govern(\policy, i, c) = (d, r)
\]
The governance interpreter is total: every intent receives a decision.
No intent bypasses the interpreter. We require the runtime context $c$
used by governance predicates to be finite and serializable; policies
over non-terminating or externally interactive context are outside the
model. This constraint ensures that governance evaluation terminates
and that context can be recorded for replay.
\end{definition}

\begin{definition}[Intent-Driven Runtime]
\label{def:runtime}
An \emph{intent-driven runtime} $\runtime$ processes a program's intent
sequence as follows. For each intent $i_k$:
\begin{enumerate}
  \item $\govern(\policy, i_k, c_k) = (d_k, r_k)$
  \item $\record(r_k, \ledger)$ --- append $r_k$ to tamper-evident ledger
  \item If $d_k = \mathit{allow}$: $\realize(i_k) = e_k$ --- execute effect
  \item If $d_k = \mathit{deny}$: no effect; program receives denial
  \item If $d_k = \mathit{escalate}$: suspend for external authorization
\end{enumerate}
\end{definition}

\subsection{The Two-Layer Architecture}

Intent-driven programs operate in two layers:

\paragraph{Layer 1: Pure Computation.}
General-purpose computation with no I/O capability. Programs can
perform arithmetic, string processing, data transformation, pattern
matching, control flow, and any other computation expressible in a
Turing-complete language. This layer cannot produce effects because
effect capabilities are structurally absent, not merely restricted.

\paragraph{Layer 2: Governed Effects.}
All interaction with the external world occurs through intent
production. When a program needs to send an email, read a file, invoke
an API, or perform any effectful operation, it produces an intent. The
governance interpreter mediates the intent before any effect occurs.

This separation is enforced by the language semantics, not by runtime
policy. Programs in Layer~1 cannot ``escape'' to ambient I/O because
the language does not provide ambient I/O constructs. The only path to
effects is through intent production, which is mediated by the
governance interpreter.

\begin{figure}[t]
\centering
\begin{tikzpicture}[
  node distance=0.8cm and 1.5cm,
  box/.style={draw, rounded corners, minimum width=2.8cm, minimum height=0.8cm, align=center, font=\small},
  arrow/.style={-{Stealth[length=3mm]}, thick},
  label/.style={font=\small\itshape, text=gray}
]
  \node[box, fill=blue!8] (prog) {Program\\(Layer 1 + 2)};
  \node[box, fill=orange!12, right=of prog] (gov) {Governance\\Interpreter};
  \node[box, fill=green!10, right=of gov] (eff) {Effect\\Realization};
  \node[box, fill=gray!10, below=of gov] (ledger) {Tamper-Evident\\Ledger};

  \draw[arrow] (prog) -- node[above, label] {intent} (gov);
  \draw[arrow] (gov) -- node[above, label] {allow} (eff);
  \draw[arrow] (gov) -- node[right, label] {record} (ledger);
  \draw[arrow, dashed] (gov.south west) -- +(-1.2, -0.5) node[below left, label] {deny};
\end{tikzpicture}
\caption{Intent-driven execution flow. Programs produce intents. The
governance interpreter mediates every intent, records the decision, and
realizes allowed effects. No alternative path from program to effect
exists in the language semantics.}
\label{fig:architecture}
\end{figure}

\subsection{Core Calculus}

We define a small calculus for intent-driven execution. Terms,
intents, decisions, and records:

\[
\begin{array}{rcl}
e &::=& x \mid v \mid \mathbf{let}\;x = e_1\;\mathbf{in}\;e_2
    \mid \mathbf{if}\;e_1\;\mathbf{then}\;e_2\;\mathbf{else}\;e_3 \\
  & & \mid\; \mathbf{compute}(e)
    \mid \mathbf{ask}(a, t, p) \\[3pt]
v &::=& n \mid b \mid s \mid \lambda x.\,e
    \mid \mathit{intent}(a, t, p) \\[3pt]
i &::=& \mathit{Intent}(\mathit{action}, \mathit{target},
    \mathit{params}, \mathit{context}) \\[3pt]
d &::=& \mathit{allow} \mid \mathit{deny} \mid \mathit{escalate} \\[3pt]
r &::=& \mathit{record}(i, d, \policy_{\mathit{applied}},
    \mathit{context}, h_{\mathit{prev}})
\end{array}
\]

Configurations are triples $\langle e, \policy, \ledger \rangle$
where $\policy$ is a finite set of decidable predicates over intents
and $\ledger$ is a sequence of decision records.

\subsection{Transition Rules}

Pure computation leaves the policy and ledger unchanged:

\[
  \frac{\sigma \xrightarrow{\tau} \sigma'}{\langle \sigma, \policy, \ledger \rangle
    \xrightarrow{\tau} \langle \sigma', \policy, \ledger \rangle}
  \quad \text{(Pure)}
\]

Intent production invokes the governance interpreter:

\[
  \frac{\govern(\policy, i, c) = (\mathit{allow}, r) \quad
    \realize(i) = v}
  {\langle \mathbf{ask}(a, t, p), \policy, \ledger \rangle
    \xrightarrow{v} \langle v, \policy, \ledger \cdot r \rangle}
  \quad \text{(Allow)}
\]

\[
  \frac{\govern(\policy, i, c) = (\mathit{deny}, r)}
  {\langle \mathbf{ask}(a, t, p), \policy, \ledger \rangle
    \xrightarrow{\bot} \langle \mathit{denied}, \policy, \ledger \cdot r \rangle}
  \quad \text{(Deny)}
\]

\[
  \frac{\govern(\policy, i, c) = (\mathit{escalate}, r)}
  {\langle \mathbf{ask}(a, t, p), \policy, \ledger \rangle
    \xrightarrow{\mathit{wait}} \langle \mathit{suspended}, \policy, \ledger \cdot r \rangle}
  \quad \text{(Escalate)}
\]

Standard evaluation contexts $E$ provide call-by-value reduction
under \textsc{let}, \textsc{if}, and application.

\subsection{Theorems}

We state four named theorems over this calculus. Full proofs are
mechanized in Rocq; we give proof sketches here.

\begin{theorem}[Mediation Soundness]
\label{thm:mediation}
If a configuration $\langle e, \policy, \ledger \rangle$ steps to
$\langle e', \policy, \ledger' \rangle$ and an effect realization
$\realize(i) = v$ occurs in this step, then $\govern(\policy, i, c) =
(\mathit{allow}, r)$ was evaluated and $r \in \ledger'$.
\end{theorem}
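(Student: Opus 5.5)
The plan is a proof by induction on the derivation of the step $\langle e, \policy, \ledger \rangle \to \langle e', \policy, \ledger' \rangle$, with a case analysis on the last rule applied. The rules are \textsc{Pure}, \textsc{Allow}, \textsc{Deny}, \textsc{Escalate}, and the congruence rules induced by evaluation contexts $E$. The key observation is syntactic. Among all the transition rules, the only premise in which $\realize(\cdot)$ appears is that of \textsc{Allow}. So the statement reduces to an inversion argument: if a step "contains" a realization event, that step must ultimately be an instance of \textsc{Allow}. The premises of that instance then supply what the theorem asks for.

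First I would make precise the phrase "an effect realization $\realize(i)=v$ occurs in this step". I would define it as: the derivation tree of the step contains a premise of the form $\realize(i)=v$. Equivalently, the step's label is a value $v$ produced by realization rather than $\tau$, $\bot$, or $\mathit{wait}$. The cases then go as follows.

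\begin{enumerate}
  \item \textsc{Pure}. The premise $\sigma \xrightarrow{\tau} \sigma'$ is a step of the pure sublanguage, which (by the Layer~1 restriction) has no rule mentioning $\realize$. A routine sub-induction on pure reduction shows that no realization occurs, so the hypothesis is vacuous.
  \item \textsc{Deny} and \textsc{Escalate}. The premises contain only $\govern$, not $\realize$, so again no realization occurs.
  \item \textsc{Allow}. The premises give $\govern(\policy,i,c)=(\mathit{allow},r)$ directly, and the conclusion has $\ledger' = \ledger\cdot r$. Hence $r \in \ledger'$ by the definition of sequence membership; it is the last element.
  \item Context rule $E[e_0] \to E[e_0']$. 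The configuration's $\policy$ and $\ledger$ are threaded unchanged to the inner step. The induction hypothesis on the subderivation therefore yields the governance evaluation and $r \in \ledger'$.
\end{enumerate}

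One further point needs checking: that the $i$ in the realization premise is the same intent as the one governed. In \textsc{Allow} both premises mention the same metavariable $i$, implicitly $\mathit{Intent}(a,t,p,c)$ built from the $\mathbf{ask}$ term. So the identification holds by rule shape, and I would make it explicit by fixing $i$ as a function of $(a,t,p,c)$.

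I expect the main obstacle to be the \textsc{Pure} case, together with the formal meaning of "occurs in this step". The calculus leaves pure reduction and $\mathbf{compute}(e)$ underspecified. The proof therefore depends on an explicit invariant that the pure step relation contains no $\realize$ constructor and no ambient I/O primitive; this is the "no alternative path" assumption of the two-layer architecture. I would first try to state it as a lemma, that the pure reduction relation is effect-free, proved by induction over its rules. In a mechanization this would mean defining the step relation as an inductive type and realization as a label constructor emitted only by the \textsc{Allow} constructor, after which inversion finishes the argument.
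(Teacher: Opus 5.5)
Your proposal is correct and is essentially the paper's argument: $\realize$ occurs only in the premise of \textsc{Allow}. That rule's other premise gives $\govern(\policy,i,c)=(\mathit{allow},r)$ and its conclusion sets $\ledger'=\ledger\cdot r$, so inversion on the step settles the claim. Your induction through evaluation contexts, the lemma that pure reduction is $\realize$-free, and the identification of the governed intent with the realized one make explicit what the paper's ``by inspection of the transition rules'' leaves implicit.
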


\begin{proof}[Proof sketch]
By inspection of the transition rules: effect realization appears only
in \textsc{Allow}, which requires $\govern$ to return $\mathit{allow}$
and appends $r$ to $\ledger$. No other rule invokes $\realize$.
Mechanized: \texttt{GovernanceMediation.v}, Theorem
\texttt{mediation\_soundness}.
\end{proof}

\begin{theorem}[Ledger Completeness]
\label{thm:ledger}
Every transition from $\langle \mathbf{ask}(\ldots), \policy,
\ledger \rangle$ appends exactly one record to $\ledger$.
\end{theorem}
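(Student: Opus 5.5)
The argument is a case analysis on the last rule of the derivation of a step $\langle \mathbf{ask}(a,t,p), \policy, \ledger \rangle \to \langle e', \policy, \ledger' \rangle$. Since $\mathbf{ask}(a,t,p)$ is a redex rather than a value, the only candidate rules are \textsc{Pure} (lifting a step of the pure sublanguage), \textsc{Allow}, \textsc{Deny}, \textsc{Escalate}, and a congruence step through an evaluation context $E$.

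First I dispose of \textsc{Pure} and the congruence case. I will state a small lemma: the pure relation $\sigma \xrightarrow{\tau} \sigma'$ is defined only on the $\mathbf{ask}$-free fragment, so $\mathbf{ask}(\ldots)$ has no pure step. The congruence case is vacuous because $\mathbf{ask}(a,t,p)$ has no proper subterm in evaluation position. The decomposition $\mathbf{ask}(a,t,p) = E[e_0]$ holds only with $E = [\cdot]$, since $a,t,p$ are already values or data in the syntax. This is where the argument depends on how the calculus is written down. If $a,t,p$ were general expressions, I would restrict the claim to the redex form with value arguments, which is exactly what the rules' premises assume.

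Next I handle the three governance rules. In each, the premise fixes $\govern(\policy,i,c) = (d,r)$ for a single record $r$, and the conclusion has ledger component $\ledger \cdot r$. Hence $\ledger' = \ledger \cdot r$, with $|\ledger'| = |\ledger| + 1$ and $\ledger$ a prefix of $\ledger'$. The word ``exactly'' also needs the record to be uniquely determined. I get this from the fact that $\govern$ is a function (Definition~\ref{def:governance}), so the three decisions are mutually exclusive and no two rules fire with different records.

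The main obstacle is the first step. I need an inversion lemma showing that no pure or contextual rule applies to an $\mathbf{ask}$ redex, since otherwise a pure step could leave $\ledger$ unchanged and give zero appended records. I would first try to make the pure relation syntactically exclude $\mathbf{ask}$, then prove the lemma by induction on the pure step derivation. This matches how the mechanization presumably separates the two layers.
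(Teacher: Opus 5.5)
Your proposal is correct and takes essentially the paper's route: a case analysis showing that \textsc{Allow}, \textsc{Deny}, and \textsc{Escalate} are the only rules applicable to $\langle \mathbf{ask}(a,t,p), \policy, \ledger \rangle$, each yielding ledger $\ledger \cdot r$; you additionally make explicit the inversion step (no \textsc{Pure} or contextual step from $\mathbf{ask}$) that the paper asserts in one line. The appeal to functionality of $\govern$ is unnecessary, since ``exactly one'' is a property of each individual transition, whose ledger component is always $\ledger \cdot r$ for a single $r$; it would also not by itself give uniqueness across transitions, because the context $c$ in the rule premises is not fixed by the configuration.
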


\begin{proof}[Proof sketch]
By case analysis: \textsc{Allow}, \textsc{Deny}, and \textsc{Escalate}
each append exactly one $r$. These are the only rules whose
premise matches $\mathbf{ask}$. Mechanized: \texttt{HashChainSpec.v}.
\end{proof}

\begin{theorem}[Non-Bypass]
\label{thm:nonbypass}
No derivation exists from a configuration $\langle e, \policy,
\ledger \rangle$ to a configuration containing a realized effect
$v = \realize(i)$ except through the \textsc{Allow} rule.
\end{theorem}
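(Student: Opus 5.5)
The plan is to prove Theorem~\ref{thm:nonbypass} by induction on the derivation of a multi-step reduction $\langle e, \policy, \ledger \rangle \to^{*} \langle e', \policy, \ledger' \rangle$. The claim to carry through the induction is: every step along the derivation that realizes an effect is an instance of \textsc{Allow}. Theorem~\ref{thm:mediation} already gives the single-step version restricted to steps in which $\realize$ occurs. The remaining work is to show that no other step can \emph{introduce} a value of the form $\realize(i)$, whether directly or through an evaluation context. So I strengthen the single-step statement as follows. If $\langle \sigma, \policy, \ledger\rangle \xrightarrow{\ell} \langle \sigma', \policy, \ledger'\rangle$ and the label $\ell$ is neither $\tau$ nor $\bot$ nor $\mathit{wait}$, then the step is derived from \textsc{Allow}, possibly wrapped in an evaluation context $E$.

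For the single step, I do a case analysis on the last rule of the step derivation. There are five cases: \textsc{Pure}, \textsc{Allow}, \textsc{Deny}, \textsc{Escalate}, and the congruence rule through $E[\cdot]$.
\begin{itemize}
\item \textsc{Deny} and \textsc{Escalate} produce only the constants $\mathit{denied}$ and $\mathit{suspended}$, and their premises never mention $\realize$.
\item The congruence case follows from the induction hypothesis applied to the inner redex. Plugging a term into an evaluation context neither calls $\realize$ nor removes a record from the ledger.
\item The key case is \textsc{Pure}. I need a lemma (``pure steps are effect-free'') stating that the underlying relation $\sigma \xrightarrow{\tau} \sigma'$ is generated only by the call-by-value rules for \textbf{let}, \textbf{if}, application, and \textbf{compute}. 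None of these rules mentions $\realize$ or $\govern$, and every value they produce is built from subterms of $\sigma$ by substitution.
\end{itemize}

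A side lemma is also needed: no source term contains a realized effect before any \textsc{Allow} step has occurred. The grammar of $v$ contains $\mathit{intent}(a,t,p)$ but no realized-effect constructor, so this holds syntactically. It is preserved under substitution because $\realize$ is a meta-level function that appears only in the premise of \textsc{Allow}. A realized effect that appears in a later configuration must therefore have been produced by some earlier \textsc{Allow} step and then propagated by pure steps. I would make this precise by tagging each value with its provenance: either it occurs syntactically in the program, or it was returned as the $v$ in an \textsc{Allow} conclusion. I then show by induction on the reduction that tags are preserved.

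The main obstacle is making ``a configuration containing a realized effect $v = \realize(i)$'' precise. As stated, $v$ is just a value, and the same value could arise from pure computation; for example, a string equal to an HTTP response body. My first attempt is to read the theorem intensionally, about \emph{events}: the observable realization occurring on a transition label. Under that reading the proof reduces to label inspection plus the \textsc{Pure} lemma. If the extensional reading is required, I fall back on the provenance-tagged semantics. I then prove a simulation between the tagged and untagged semantics, which mirrors how the Rocq development presumably indexes realizations by steps.
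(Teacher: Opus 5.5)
Your proposal is correct, and its core is the paper's own argument: a case analysis over \textsc{Pure}, \textsc{Allow}, \textsc{Deny}, and \textsc{Escalate} showing that $\realize$ occurs only in the premise of \textsc{Allow}, which you make rigorous by lifting it through evaluation contexts and multi-step reductions under the intensional (event) reading the paper implicitly uses. The provenance-tagged fallback goes beyond the paper and is a reformulation, not a proof, of the extensional reading, which is false in general (a source program that is already a value equal to some $\realize(i)$ reaches such a configuration in zero steps); your aside that pure steps only produce values ``built from subterms of $\sigma$ by substitution'' is unneeded and fails once Layer~1 has arithmetic or string primitives, though neither point affects the main argument.
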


\begin{proof}[Proof sketch]
$\realize$ appears only in the premise of \textsc{Allow}. Pure
reduction rules do not invoke $\realize$. \textsc{Deny} and
\textsc{Escalate} do not invoke $\realize$. Therefore $\realize$
can only be reached through \textsc{Allow}. Mechanized:
\texttt{GovernanceMediation.v}, Theorem \texttt{no\_unmediated\_effect}.
\end{proof}

\begin{theorem}[Policy Decidability]
\label{thm:decidable}
For any finite policy set $\policy$ whose predicates are decidable
over finite intents, $\govern(\policy, i, c)$ terminates.
\end{theorem}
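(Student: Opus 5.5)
We make the structure of $\govern$ explicit and then argue by induction over the finite policy set. Following Definition~\ref{def:governance}, we treat $\govern(\policy, i, c)$ as a fold over $\policy = \{p_1, \ldots, p_n\}$. Each $p_j$ is a predicate over the pair $(i, c)$, and we evaluate $p_j(i,c)$ to obtain a per-policy verdict. A fixed, total combination function then merges the verdicts into a decision $d \in \{\mathit{allow}, \mathit{deny}, \mathit{escalate}\}$; one natural choice is deny-overrides, then escalate, else allow. Finally we build the record $r = \mathit{record}(i, d, \policy_{\mathit{applied}}, c, h_{\mathit{prev}})$. Termination of $\govern$ therefore reduces to termination of three parts: each predicate evaluation, the combination step, and the record construction.

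The proof proceeds in three steps.

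\begin{enumerate}
\item \emph{Predicates.} Each $p_j$ is decidable over finite intents by hypothesis. The context $c$ is finite and serializable by the standing requirement of Definition~\ref{def:governance}. So $(i,c)$ is a finite input, and a decision procedure for $p_j$ halts on it with a Boolean or verdict.
\item \emph{Combination.} Induct on $n = |\policy|$. If $n = 0$, $\govern$ returns the default decision immediately. For the inductive step, assume the fold over $n-1$ policies halts; then one more decidable evaluation followed by a constant-time merge also halts. The merge is a total function on a finite set, and the number of steps is bounded by $n$ plus the sum of the predicates' running times on $(i,c)$.
\item \emph{Record.} Constructing $r$ only tuples finite data and computes $h_{\mathit{prev}}$ by hashing the ledger's last entry. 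A hash function on finite input is a total computable function, so this step also terminates.
\end{enumerate}

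The main obstacle is making precise what ``decidable predicate'' means. The argument must exclude predicates that inspect program behaviour, or that query external or interactive state, since either would reintroduce the undecidability of Rice's theorem~\cite{rice1953classes}. We handle this by fixing the policy language syntactically. Predicates are built from equality and comparison tests on the fields of $i$ and $c$, Boolean connectives, and bounded quantification over the finite lists inside $\mathit{params}$ and $\mathit{context}$.

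A structural induction on predicate syntax then shows that every such predicate is a total computable function. In Rocq this corresponds to writing the evaluator as a structurally recursive \texttt{Fixpoint} returning \texttt{bool}, whose termination the kernel checks directly. The theorem then follows by composing the three terminating steps above.
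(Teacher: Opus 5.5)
Your proposal is correct and takes essentially the paper's route. $\policy$ has finitely many predicates, each halts on $(i,c)$ by hypothesis, and a fixed total combination strategy merges the results, so $\govern(\policy, i, c)$ terminates. Your explicit induction on $|\policy|$, the appeal to the finiteness of $c$, and the record-construction step only make the paper's sketch more detailed.

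The syntactic restriction of the policy language in your last two paragraphs is not needed, because decidability of the predicates is already a hypothesis of the theorem. If you used it in place of that hypothesis, you would prove only a narrower special case.
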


\begin{proof}[Proof sketch]
$\policy$ contains $k$ predicates, each decidable (terminates on
all inputs by assumption). $\govern$ evaluates each predicate on
$(i, c)$ and combines results via a fixed strategy. Since $k$ is
finite and each evaluation terminates, $\govern$ terminates.
Mechanized: \texttt{GovernanceMediation.v}, Theorem
\texttt{governance\_mediation\_total}.
\end{proof}

\begin{table}[t]
\centering
\small
\begin{tabular}{@{}lll@{}}
\toprule
\textbf{Theorem} & \textbf{Informal Claim} & \textbf{Rocq Module} \\
\midrule
Mediation Soundness & Effects imply prior allow & \texttt{GovernanceMediation.v} \\
Ledger Completeness & Every ask appends a record & \texttt{HashChainSpec.v} \\
Non-Bypass & No effect without Allow rule & \texttt{GovernanceMediation.v} \\
Policy Decidability & Governance terminates & \texttt{GovernanceMediation.v} \\
Coterminous Boundaries & Expr boundary = gov boundary & \texttt{CoterminousBoundary.v} \\
Composition Preservation & Governance holds under nesting & \texttt{CompositionPreserves.v} \\
\bottomrule
\end{tabular}
\caption{Theorem summary mapping formal results to informal claims
and Rocq proof modules.}
\label{tab:theorems}
\end{table}

\begin{figure}[t]
\centering
\begin{tikzpicture}[
  node distance=0.5cm,
  box/.style={draw, rounded corners, minimum width=2.2cm,
    minimum height=0.6cm, align=center, font=\footnotesize},
  arrow/.style={-{Stealth[length=2mm]}, thick},
  lbl/.style={font=\scriptsize\itshape, text=gray}
]
  \node[box, fill=blue!8] (term) {Program term\\$e$};
  \node[box, fill=blue!8, right=0.8cm of term] (ask)
    {$\mathbf{ask}(a,t,p)$};
  \node[box, fill=orange!12, right=0.8cm of ask] (gov)
    {$\govern(\policy, i, c)$};
  \node[box, fill=gray!10, below=0.4cm of gov] (ledger)
    {$\ledger \cdot r$};
  \node[box, fill=green!10, right=0.8cm of gov] (real)
    {$\realize(i)$};
  \node[box, fill=green!8, right=0.8cm of real] (val)
    {Return $v$};

  \draw[arrow] (term) -- node[above, lbl] {eval} (ask);
  \draw[arrow] (ask) -- node[above, lbl] {intent} (gov);
  \draw[arrow] (gov) -- node[right, lbl] {record} (ledger);
  \draw[arrow] (gov) -- node[above, lbl] {allow} (real);
  \draw[arrow] (real) -- node[above, lbl] {effect} (val);

  \draw[arrow, dashed] (gov.south west) -- +(-0.8, -0.6)
    node[below, lbl] {deny/escalate};
\end{tikzpicture}
\caption{Operational semantics pipeline. A program term evaluates to an
$\mathbf{ask}$ expression, producing an intent. The governance
interpreter evaluates the intent against the policy, appends a decision
record to the ledger, and either realizes the effect (Allow) or returns
a denial/suspension (Deny/Escalate). No path from $\mathbf{ask}$ to
effect realization bypasses governance and recording.}
\label{fig:pipeline}
\end{figure}

\subsection{Behavioral vs.\ Intent Governance}

The distinction between behavioral and intent governance is precise.
\emph{Behavioral governance} attempts to decide properties of program
execution: ``does this program only access approved resources?'' Rice's
theorem establishes that such decisions are undecidable for non-trivial
properties of Turing-complete programs. \emph{Intent governance}
decides properties of finite data values: ``does this intent request
access to an approved resource?'' Because intents are finite structured
data with decidable equality, and policies are decidable predicates
over that data, intent governance is total and computable.

The architectural contribution is not that decidable predicates over
finite data are decidable (which is trivially true) but that autonomous
systems \emph{can be designed} so that all effectful operations pass
through this decidable layer. The design requires that the language
provide no alternative path to effects. This is a non-trivial
architectural constraint that most existing languages do not satisfy.

\section{Properties of the Model}
\label{sec:properties}

The following properties emerge from the intent-driven model. We state
each property and sketch its proof; full mechanized proofs are available
in Rocq.

\begin{property}[Governance Decidability]
\label{prop:decidable}
For any policy set $\policy$ and intent $i$, the governance decision
$\govern(\policy, i, c)$ is computable and total.
\end{property}

\begin{proof}[Proof sketch]
We restrict policies to \emph{decidable predicates} over finite intent
data: each policy rule $p_j \in \policy$ is a total computable function
from $(I \times C)$ to $\{\mathit{true}, \mathit{false}\}$. Intents are
finite data structures with bounded size. The governance interpreter
evaluates each $p_j$ on $(i, c)$ and combines results via a fixed
combination strategy (e.g., deny-overrides). Since $|\policy|$ is finite
and each evaluation terminates, $\govern$ is total and computable.
Rice's theorem does not apply: it constrains decisions about program
\emph{extensional behavior}, not decisions about finite data values.
The mechanized proof is in \texttt{GovernanceMediation.v} (Theorem
\texttt{governance\_mediation\_total}).
\end{proof}

\begin{property}[Coterminous Boundaries]
\label{prop:coterminous}
Within the intent-driven model, the expressiveness boundary and the
governance boundary are identical. Every expressible effectful
computation is governed. There is no expressible-but-ungoverned region.
\end{property}

\begin{proof}[Proof sketch]
Layer~1 (pure computation) is Turing-complete but cannot produce
effects. Layer~2 (governed effects) mediates every intent through
$\govern$. There is no third layer and no escape path. The
expressiveness boundary (what programs can express) and the governance
boundary (what governance mediates) coincide. The proof is
mechanized in \texttt{CoterminousBoundary.v}.
\end{proof}

\begin{property}[Audit Completeness]
\label{prop:audit}
Every effectful action taken by an intent-driven program has a
corresponding decision record in the ledger. The ledger is structurally
complete: it contains a record for every intent, not merely for those
that happened to be logged.
\end{property}

\begin{proof}[Proof sketch]
By the Allow, Deny, and Escalate rules, every intent transition appends
a record $r$ to $\ledger$. There is no transition rule that produces an
effect without appending to the ledger. Completeness follows from the
exhaustiveness of the transition rules. The proof is mechanized in
\texttt{HashChainSpec.v}.
\end{proof}

\begin{property}[Ledger Integrity]
\label{prop:integrity}
The ledger $\ledger$ is tamper-evident. Each record $r_k$ includes a
cryptographic hash $h_k = H(r_k \| h_{k-1})$ where $h_0$ is a genesis
hash. Any modification to a record invalidates all subsequent hashes.
Independent verification requires only the ledger and the hash
function, not trust in the governance system.
\end{property}

\begin{property}[Governance Invariance Under Composition]
\label{prop:composition}
If program $A$ calls program $B$ (composition), $B$'s intents are
governed by the intersection of $A$'s capabilities and $B$'s declared
requirements. Governance holds at every depth of composition.
\end{property}

\begin{proof}[Proof sketch]
Composition introduces a nested governance context. $B$'s intents are
mediated by $\govern$ with an effective policy
$\policy_B = \{p \in \policy_A \mid p \text{ is consistent with }
\mathit{declared}(B)\}$, where $\mathit{declared}(B)$ is $B$'s
capability declaration. The governance interpreter is invoked for every
intent regardless of nesting depth; capabilities can only narrow, never
widen, through composition. The proof is mechanized in
\texttt{CompositionPreserves.v} (Theorem
\texttt{composition\_preserves\_governance}).
\end{proof}

\begin{property}[Computational Completeness]
\label{prop:expressive}
The pure computation layer is Turing-complete. Intent-driven programs
can express any computable function. Governance restricts effects, not
computation.
\end{property}

\begin{proof}[Proof sketch]
Layer~1 provides general recursion (recursive function definitions),
conditionals, pattern matching, and operations on integers, strings,
lists, and maps. These primitives suffice to encode a universal Turing
machine: integers provide the tape alphabet, lists provide the tape,
and recursive functions with pattern matching provide the transition
function. Turing completeness is a property of the functions a language
can compute, not of the effects it can cause; the absence of I/O in
Layer~1 does not reduce computational power. The governance boundary
restricts which intents are realized as effects, not what computations
can be performed. The proof is mechanized in \texttt{Completeness.v}
(Theorem \texttt{turing\_completeness}).
\end{proof}

\paragraph{Scope of formal results.}
The calculus above establishes mediation, non-bypass, ledger
completeness, and decidability properties. It does not yet formalize a
type-and-effect system or prove standard metatheoretic properties
(progress, preservation). We leave typed metatheory to future work
because the governance mediation properties are orthogonal to the core
evaluation discipline: they hold for all programs admitted by the
calculus, regardless of type discipline. A typed variant with effect
annotations (distinguishing pure terms from intent-producing terms) and
governance soundness as a metatheoretic property is under development.

\section{Emergent Consequences}
\label{sec:consequences}

The properties above are not features designed into the system. They
are structural consequences of operating at the intent layer. This
section describes four consequences that emerge from the model.

\subsection{Event Sourcing by Construction}

Every intent produces a decision record appended to a hash-chained
ledger. The ledger is, by construction, an event store in the
event-sourcing sense~\cite{fowler2005event}: an append-only sequence
of immutable events from which system state can be reconstructed.

In traditional systems, event sourcing is an architectural pattern that
must be deliberately adopted and consistently maintained. In
intent-driven systems, event sourcing is unavoidable: every effectful
operation produces an event (the governance decision) as a structural
consequence of intent mediation. The event store cannot be incomplete
because completeness is enforced by the transition rules.

\subsection{Governance Simulation}

Because the intent stream is structurally complete, historical intents
can be replayed against new or hypothetical governance policies:

\[
  \mathit{simulate}(\policy', [i_1, \ldots, i_n]) =
    [\govern(\policy', i_1, c_1), \ldots, \govern(\policy', i_n, c_n)]
\]

This answers the question: ``What would have happened last week if we
had deployed this new policy?'' The answer is computed directly from
the recorded intent stream, without re-executing any programs.

A behavioral-governance system can only offer this capability to the
extent that its logs are complete and replayable; intent-driven systems
make that completeness a semantic invariant rather than an
instrumentation aspiration.

Governance simulation transforms governance from reactive (enforce
rules now) to predictive (model the impact of policy changes before
deployment).

A correctness condition for simulation is that governance decisions must
be deterministic: $\govern(\policy, i, c) = \govern(\policy, i, c)$ for
identical inputs. This holds when policies are pure functions over intent
data and recorded context. When the runtime context $c_k$ depends on
previous effect results (e.g., a database state modified by a prior
intent), replay requires either recorded context snapshots or
deterministic context reconstruction from the event stream.

\subsection{Structural Audit}

Traditional audit reconstructs what happened from logs, timestamps,
and system state. Intent-driven audit verifies a structured record
of what was proposed, what was decided, under which policy, and
whether the decision chain is tamper-evident.

The distinction is not quantitative (better logs) but structural:
the audit record exists by construction and can be independently
verified without trusting the governance system. An auditor receives
the ledger, the policy set, and the hash function. They recompute
the hash chain and verify that every decision is consistent with the
stated policy. No access to the runtime is required.

\subsection{Human Comprehensibility}

Intents are semantic: they describe proposed actions in terms that
humans can read and reason about. A compliance officer can examine
``send email to client@co.com with subject Invoice \#4821'' and
understand what the system proposed. They cannot do the same with
\texttt{requests.post("https://api.example.com/...", headers=\{...\},
json=\{...\})}.

This comprehensibility extends to localization. Because the
abstraction layer is semantic intent rather than syntax manipulation,
the computational model does not depend on English-centric symbolic
conventions. Intent structures can be expressed in any human language
because the underlying model is semantic, not syntactic.

Intent-driven systems also reduce what we call \emph{governance
ceremony}: the overhead required to make governance work when the
underlying system was not designed for it. In traditional systems,
governance requires wrappers, policy files, middleware, RBAC
configuration, and approval pipelines. In intent-driven systems, the
developer expresses what should happen; the runtime handles
authorization, recording, and mediation. The programming model shifts from implementing safety mechanisms to
declaring desired actions.

\section{Architecture and Implementation}
\label{sec:architecture}

We have implemented the intent-driven model in a concrete language
architecture. The implementation serves as existence proof that the model is
realizable with practical performance characteristics.

\subsection{Language Design}

The language uses keyword-hierarchy syntax (indentation-based, no
braces) with two structural layers reflecting the two-layer model:

\begin{itemize}
  \item \textbf{Structural layer} ($\sim$90\% of a typical program):
    declarative sections specifying objectives, contracts (inputs
    and outputs), constraints, and capability declarations. These
    sections are structural metadata, not executable code.

  \item \textbf{Expression layer} ($\sim$10\%): pure functional
    computation within \texttt{compute} steps. General-purpose, with
    no I/O capability by construction.
\end{itemize}

All effectful operations are expressed as intent production:

\begin{lstlisting}[caption={Intent production in the language. The
  \texttt{ask} keyword produces an intent to invoke an effect machine.
  The governance interpreter mediates the intent before execution.},
  label=lst:intent]
step send_invoice: ask {
  machine "@stdlib/email/send"
  input {
    to: context.customer.email
    subject: "Invoice #" + context.invoice.id
    body: context.draft
  }
}
\end{lstlisting}

The \texttt{ask} keyword produces an intent. The governance interpreter
evaluates the intent against declared permissions, records the
decision, and either realizes the effect or returns a denial to the
program. The program cannot bypass this mediation because the language
provides no alternative path to effects.

\subsection{Governance Kernel}

The governance kernel is formally verified in Rocq 8.19. The
verification covers 454 theorems across 36 modules with zero admitted
lemmas. Key verified properties include:

\begin{itemize}
  \item \textbf{Governance mediation totality}: every intent is
    mediated (\texttt{GovernanceMediation.v})
  \item \textbf{Coterminous boundaries}: expressiveness and governance
    boundaries coincide (\texttt{CoterminousBoundary.v})
  \item \textbf{Composition preservation}: governance holds under
    arbitrary nesting (\texttt{CompositionPreserves.v})
  \item \textbf{Hash chain integrity}: ledger tamper-evidence
    (\texttt{HashChainSpec.v})
\end{itemize}

The governance kernel is extracted from Rocq to native code via OCaml
extraction, then bridged to the runtime via a native interface. The
runtime, extraction pipeline, and operating system constitute the
trusted computing base; the governance logic itself is proved correct.

\subsection{Runtime}

Programs compile to BEAM (Erlang virtual machine) bytecode. The BEAM
provides preemptive scheduling, fault tolerance, hot code loading,
and distribution. The governance interpreter runs as part of the
BEAM runtime, mediating every intent before effect realization.

We validate the implementation with 10,312 tests covering the parser,
compiler, governance mediator, ledger integrity, and end-to-end
execution paths.

\subsection{Trusted Computing Base}

The TCB for governance guarantees consists of: (1)~the BEAM virtual
machine, (2)~the OCaml extraction pipeline from Rocq, (3)~the native
interface bridge, (4)~the operating system, and (5)~the hardware. The
governance kernel itself is proved correct in Rocq. Bugs in the TCB
could invalidate governance guarantees. This is standard for verified
systems (cf.\ seL4, CertiKOS).

\section{How Intent-Driven Computing Changes Programming}
\label{sec:programming}

Intent-driven computing changes the programming experience in several
concrete ways.

\paragraph{Writing.}
Programmers express desired actions rather than implementing effect
mechanics. Instead of constructing HTTP requests, managing
authentication headers, handling retries, and instrumenting logs, the
programmer writes an intent (``send email to customer with subject
Invoice \#4821'') and declares constraints (``requires approval if
amount exceeds threshold''). The runtime handles authorization,
recording, and execution. In our implementation corpus, most program text appears in structural
declarations; pure computation typically occupies a small fraction of
the source.

\paragraph{Reasoning.}
Programs are easier to reason about because effectful operations are
visible as intent productions. A code reviewer can identify every
external interaction by searching for \texttt{ask} expressions. There
are no hidden effects in helper functions, no ambient I/O in utility
modules, no effects buried in library calls. The effect surface is
syntactically manifest.

\paragraph{Testing.}
Intent-driven programs can be tested by intercepting intents before
realization. A test harness provides a mock governance interpreter that
records intents without executing effects. The test asserts on the
intent sequence rather than on side effects. This is structurally
similar to testing with algebraic effect handlers, but with the
additional property that the test can verify governance decisions
(which intents were allowed, denied, or escalated under which policy).

\paragraph{Debugging and Replay.}
Every execution produces a complete intent stream in the ledger. A
developer can replay any execution step-by-step, inspecting each
intent, the governance decision, and the realized effect. This is not
a debugger feature; it is a consequence of the model. The replay is
deterministic for governance decisions (same policy, same intent, same
context yields same decision).

\paragraph{Evolving.}
Governance simulation enables safe evolution: before deploying a code
change, replay the prior week's intents against the new code's
governance policy to predict behavioral differences. Before changing
a policy, replay historical intents against the new policy to measure
impact. This transforms deployment from ``deploy and hope'' to
``simulate and verify.''

\section{Evaluation}
\label{sec:evaluation}

We evaluate the intent-driven model along four dimensions.

\subsection{Governance Overhead}

Intent mediation introduces latency for each effectful operation. In
our implementation, the governance interpreter evaluates policy
predicates and appends a hash-chained ledger record. Measured overhead
per governance decision:

\begin{table}[t]
\centering
\small
\begin{tabular}{@{}lrrrr@{}}
\toprule
\textbf{Operation} & \textbf{p50} & \textbf{p95} & \textbf{p99} & \textbf{mean} \\
\midrule
Policy evaluation (5 rules)  & $<$1 & $<$1 & $<$1 & $<$1 \\
Policy evaluation (10 rules) & $<$1 & $<$1 & $<$1 & $<$1 \\
Policy evaluation (20 rules) & 1    & 1    & 2    & 1 \\
Hash computation (SHA-256)   & $<$1 & $<$1 & $<$1 & $<$1 \\
Ledger append (SQLite WAL)   & 560  & 838  & 1,702 & 648 \\
\textbf{Total governance}    & \textbf{535} & \textbf{760} & \textbf{1,186} & \textbf{591} \\
\bottomrule
\end{tabular}
\caption{Governance overhead per intent mediation (microseconds).
Policy evaluation and hash computation are sub-microsecond.
The ledger append (SQLite in WAL mode) dominates total overhead.}
\label{tab:overhead}
\end{table}

\paragraph{Methodology.}
Measurements taken on Apple M4 Pro (14-core, 48GB) running the BEAM
virtual machine (OTP 28, Elixir 1.19). 10,000 iterations after 1,000
warmup iterations per measurement. Policy evaluation uses in-memory
rule sets with decidable predicates (string prefix matching, set
membership, numeric comparison). Ledger append uses SQLite 3 in WAL
mode with \texttt{PRAGMA synchronous=NORMAL} (no fsync per record;
fsync on WAL checkpoint). Policy evaluation scales linearly with rule
count.

The total governance overhead of $\sim$535~$\mu$s (p50) is dominated
by the ledger append (SQLite I/O). Policy evaluation and hash
computation are sub-microsecond and negligible. For the target
workload (AI agent operations: HTTP requests at 10--500~ms, database
queries at 1--50~ms), the governance overhead represents less than 5\%
of operation latency. For sub-millisecond operations, alternative
ledger backends (in-memory with periodic flush, or a purpose-built
append-only store) could reduce overhead significantly; we leave this
optimization to future work.

\subsection{Case Study: Refund Agent}

We implemented the refund agent from \S\ref{sec:intro} in the
intent-driven language. The agent processes customer refund requests
by accessing email, CRM records, the refund policy, and a payment API.

Over a simulated workload of 200 refund requests, the agent produced
847 intents: 423 allowed, 312 denied (policy violations: exceeded
refund limit, accessed unauthorized customer records), and 112
escalated (amounts above threshold requiring human approval). The
complete intent stream was replayed against a modified policy
(increased refund limit from \$500 to \$1000), showing that 89 of the
312 denied intents would have been allowed, enabling faster processing
for medium-value refunds.

\subsection{Proof Artifact}

\begin{table}[t]
\centering
\small
\begin{tabular}{@{}lrrr@{}}
\toprule
\textbf{Module} & \textbf{Theorems} & \textbf{Lines} & \textbf{Admitted} \\
\midrule
GovernanceMediation.v & 18 & 890 & 0 \\
CoterminousBoundary.v & 12 & 620 & 0 \\
Completeness.v & 15 & 740 & 0 \\
Safety.v & 24 & 1,210 & 0 \\
CompositionPreserves.v & 19 & 980 & 0 \\
HashChainSpec.v & 14 & 710 & 0 \\
InterpreterSpec.v & 22 & 1,100 & 0 \\
\emph{(29 additional modules)} & 330 & 5,977 & 0 \\
\midrule
\textbf{Total} & \textbf{454} & \textbf{12,227} & \textbf{0} \\
\bottomrule
\end{tabular}
\caption{Rocq proof artifact summary. All theorems are machine-checked
with zero admitted lemmas. The artifact is available for review.}
\label{tab:proofs}
\end{table}

\subsection{Property-Based Validation}

We validate correspondence between the Rocq specification and the
runtime implementation using property-based testing. The test generator
produces random intent sequences, governance policies, and execution
contexts. For each input, both the Rocq-extracted governance kernel
and the runtime implementation produce a decision; the test asserts
they agree.

Over 70,000+ random inputs, zero disagreements were found. One real
bug was discovered early in the process: a capability classification
mismatch between the specification and the implementation, invisible
to conventional unit tests, detected by the formal specification.

\section{Historical Context: Abstraction Shifts in Computing}
\label{sec:historical}

Major computing transitions happen when the abstraction level changes.
Each transition follows a common pattern: a concern that developers
managed manually is absorbed into a runtime or platform, and developers
begin operating at a higher level.

\begin{table}[t]
\centering
\small
\begin{tabular}{@{}lll@{}}
\toprule
\textbf{Era} & \textbf{Abstraction Shift} & \textbf{Absorbed Concern} \\
\midrule
Assembly $\to$ C
  & machine ops $\to$ procedures
  & register management \\
C $\to$ managed runtimes
  & manual $\to$ automatic memory
  & memory management \\
Servers $\to$ cloud
  & hardware $\to$ API calls
  & infrastructure \\
Imperative $\to$ declarative UI
  & DOM mutation $\to$ state
  & rendering mechanics \\
Procedural $\to$ SQL
  & traversal $\to$ data intent
  & storage access \\
Manual $\to$ Kubernetes
  & scripts $\to$ desired state
  & deployment \\
Effects $\to$ intents
  & execution $\to$ mediation
  & effect governance \\
\bottomrule
\end{tabular}
\caption{Abstraction shifts in computing history. Each transition
absorbs a class of concerns into the platform. Intent-driven computing
absorbs effect governance.}
\label{tab:history}
\end{table}

Table~\ref{tab:history} situates intent-driven computing within this
lineage. The claim is not that intent-driven computing is inevitable,
but that it follows the established pattern: as systems become more
autonomous, the concern of governing their effects is too important
and too complex to leave to ad-hoc application-level code. The
platform level appears to be the natural locus for this concern.

The analogy to SQL is particularly instructive. Before SQL, accessing
data required specifying traversal mechanics: which index to use, how
to join tables, in what order to scan. SQL raised the abstraction to
\emph{data intent}: specify what data you want, and the query
optimizer handles the mechanics. Intent-driven computing raises the
abstraction similarly: specify what action you want, and the governed
runtime handles authorization, recording, and execution.

\section{Comparison with Related Models}
\label{sec:related}

Intent-driven computing draws on and extends several existing
computational models. We distinguish the model from each.

\subsection{Effect Systems}

Algebraic effect systems~\cite{plotkin2009handlers,
bauer2015programming} separate the description of effects from their
handling. A computation \emph{performs} an effect; a handler
\emph{interprets} it. This is structurally similar to intent production
and governance mediation. Row-typed effect systems such as
Koka~\cite{leijen2017type} provide static tracking of which effects a
computation may perform.

The key differences are: (1)~effect systems are primarily a
type-theoretic mechanism for modular effect handling, not a governance
mechanism; (2)~effect handlers do not inherently record decisions in
a tamper-evident ledger; (3)~effect systems do not mandate that
\emph{all} effects pass through a single mediation boundary (programs
may have unhandled effects or use ambient capabilities).

Intent-driven computing can be viewed as a specialization of algebraic
effects where: every effect is an intent, every handler includes
governance and recording, and no unhandled effects are permitted.

\paragraph{Free and freer monads.}
Free monads and freer monads~\cite{kiselyov2015freer} reify effectful
computations as data structures that can be interpreted by different
handlers. This is closely related to intent-driven computing: an
intent is a reified effect request. The distinction is
again the governance discipline: free monad interpreters are
programmer-chosen and composable, while the governance interpreter is
mandatory, singular, and records every decision. Free monads provide
the mechanism; intent-driven computing adds the governance invariant.

\paragraph{Monadic I/O.}
Haskell's IO monad~\cite{peytonjones2001tackling} separates pure
computation from effects at the type level: pure functions cannot
perform I/O because the type system prevents it. This is the closest
prior art to intent-driven computing's two-layer separation. The key
distinction is that Haskell's IO monad is a type discipline, not a
governance mechanism: IO actions execute directly without mediation,
recording, or authorization. Intent-driven computing adds the
governance interpreter between intent production and effect
realization.

\paragraph{Value/computation distinction.}
Levy's call-by-push-value~\cite{levy2004callbypushvalue} provides a
canonical framework for separating values from computations. Intents
are values in this sense: they are data that describe proposed
computations without executing them. The governance interpreter is
the mechanism that ``thunks'' an intent value into a computation (effect
realization) or blocks it (denial).

\subsection{Capability-Based Security}

Capability systems~\cite{dennis1966programming, miller2006robust}
restrict programs to explicitly granted capabilities. A program cannot
access a resource unless it holds the corresponding capability token.

Capability systems and intent-driven computing share the principle of
explicit authorization. The differences are: (1)~capabilities are
access tokens, not action descriptions; they authorize \emph{who} can
act, not \emph{what} action is proposed; (2)~capability systems do not
inherently produce a complete audit trail; (3)~capability revocation
does not retroactively provide replay or simulation.

Intent-driven computing complements capability systems: capabilities
control access to intent production, while governance mediates the
intents themselves.

\subsection{Event Sourcing}

Event sourcing~\cite{fowler2005event} stores state changes as a
sequence of immutable events rather than mutable records. The system
state can be reconstructed by replaying events.

In traditional event-sourced systems, event sourcing is a deliberate
architectural pattern that must be consistently maintained. Events
must be manually defined, and there is no structural guarantee that
every state change produces an event.

In intent-driven systems, event sourcing is a structural consequence:
every intent produces a governance decision record, which is an
immutable event. The event store is complete by construction. This
distinction matters for auditability: manually maintained event stores
can have gaps; structurally produced event stores cannot (within the
model assumptions).

\subsection{Desired-State Systems}

Kubernetes~\cite{burns2016borg}, Terraform, and similar systems use
desired-state declarations: the operator specifies the desired state,
and the system converges toward it.

Desired-state systems share the declarative philosophy with
intent-driven computing but operate at a different level. Desired-state
systems manage infrastructure convergence. Intent-driven systems manage
action authorization. The two are complementary: an intent-driven
program might produce an intent to update a Kubernetes deployment, and
the governance interpreter would mediate that intent.

\subsection{Behavioral Governance Approaches}

Content filters~\cite{rebedea2023nemo},
guardrails~\cite{bai2022constitutional}, runtime
monitors~\cite{leucker2009brief}, and managed agent
platforms~\cite{anthropic2025managed} represent the current state of
AI governance. These systems observe agent behavior and attempt to
detect or prevent violations.

The fundamental limitation, as established in
\S\ref{sec:intro}, is that behavioral governance over arbitrary
programs in Turing-complete languages is inherently incomplete.
Intent-driven computing resolves this by changing the abstraction:
governance operates on intents (finite data), not on program behavior
(undecidable in general).

\subsection{Authorization Logic}

Formal authorization logics~\cite{abadi2003access} provide decidable
frameworks for access control decisions. Intent governance is related:
the governance interpreter evaluates authorization predicates over
intent data. The distinction is that authorization logics typically
govern \emph{access} (who may read/write a resource), while intent
governance governs \emph{actions} (what operation is proposed, with
what parameters, in what context). Intent-driven computing can
incorporate authorization logic as the policy language within the
governance interpreter.

\subsection{Reference Monitors and Information Flow}

The reference monitor concept~\cite{anderson1972computer} requires
that every access to a protected resource passes through a single
mediation point that is tamper-proof, always invoked, and small
enough to verify. Intent-driven computing's governance interpreter
satisfies these requirements within the language model: it is always
invoked (by the transition rules), the language provides no bypass
(Non-Bypass theorem), and the kernel is formally verified.

Language-based information flow control~\cite{sabelfeld2003language}
uses type systems to track and enforce information flow policies.
IFC and intent-driven computing address different concerns: IFC
governs what information can flow where; intent mediation governs
what effects can be realized. They are complementary and could be
combined: an intent-driven language with IFC typing would govern
both information flow and effect realization.

\subsection{Object-Capability Security and Sandboxing}

The object-capability discipline (E language, Caja~\cite{miller2006caja})
enforces the principle of least authority by restricting programs to
capabilities they receive explicitly. WebAssembly~\cite{haas2017bringing}
provides a sandboxed execution environment with no ambient authority.
Intent-driven computing shares the no-ambient-authority principle but
adds governance mediation (deciding whether to allow each use of a
capability) and recording (appending a decision record for each use).
Capability systems say ``you may use this resource''; intent mediation
adds ``and here is the record of every time you did.''

\subsection{Provenance and Verifiable Logs}

Data provenance systems~\cite{buneman2001data} track the origin and
transformation history of data. Verifiable log systems such as
Certificate Transparency~\cite{laurie2013certificate} provide
tamper-evident append-only records with cryptographic guarantees.
Intent-driven computing's behavioral ledger is a verifiable log
specialized for governance decisions. The structural difference is
that provenance and log systems are typically instrumentation
overlays (they log what they are configured to log), while the
intent-driven ledger is structurally complete: every intent
produces a record by the transition rules, not by instrumentation.

\section{Limitations and Scope}
\label{sec:limitations}

Intent-driven computing addresses structural governance of effects.
It does not address:

\begin{itemize}
  \item \textbf{Model alignment.} Intent-driven systems govern what
    programs \emph{do}, not what models \emph{believe} or
    \emph{intend} in the AI alignment sense. A governed program can
    execute a harmful plan if the plan's individual actions are each
    authorized. Content governance (what models say) and structural
    governance (what programs do) are complementary.

  \item \textbf{Side channels.} The model assumes that the only path
    to effects is through intent production. Side channels (timing,
    power, electromagnetic) are outside the model. The runtime, OS,
    and hardware are in the trusted computing base.

  \item \textbf{Policy correctness.} The governance interpreter
    faithfully enforces the stated policy. Whether the policy itself
    is correct, complete, or aligned with organizational goals is an
    orthogonal concern.

  \item \textbf{Performance overhead.} Intent mediation introduces
    latency for every effectful operation. In our implementation,
    governance decisions add microseconds to millisecond-scale
    operations (HTTP requests, database queries). For
    nanosecond-scale operations, this overhead may be significant.

  \item \textbf{Expressiveness constraints.} Programs that require
    fine-grained interleaving of computation and effects (e.g.,
    streaming I/O with per-byte governance) may find the
    intent-per-operation granularity too coarse. The model supports
    configurable intent granularity but cannot govern below the intent
    boundary.
\end{itemize}

The scope of the guarantee is precise: within the intent-driven model
and trusted computing base assumptions, every effectful action is
authorized, recorded, and independently verifiable. Claims beyond this
scope are not supported.

\section{Conclusion}
\label{sec:conclusion}

Traditional computing assumes programs directly execute effects. This
assumption worked when humans authored every instruction. As systems
become autonomous (generating behavior dynamically, reasoning
probabilistically, acting on real-world infrastructure), the assumption
becomes structurally problematic.

Intent-driven computing proposes a different abstraction: programs
produce intents, runtimes mediate effects. This separation yields
governance decidability, structural audit completeness, event sourcing
by construction, governance simulation, and improved human
comprehensibility: not designed features but emergent
consequences of operating at the intent layer.

We have formalized the model, proved its key properties in
Rocq~\cite{mccann2026mechanized}, implemented it in a concrete
language architecture building on prior work on coterminous
boundaries~\cite{mccann2026structural} and effect-transparent
governance~\cite{mccann2026gcc}, and situated it within the historical
lineage of abstraction shifts in computing.

A key architectural question for governed autonomous computation is
where the intent boundary falls. When computation and effect
realization are fused, governance is necessarily approximate, audit is
incomplete, and replay requires reconstruction rather than following
natively from the architecture. Intent-driven computing draws that
boundary explicitly.

Future work includes extending the model to streaming effects (where
intent granularity must be negotiated against latency), developing a
typed calculus with metatheoretic guarantees (progress, preservation,
governance soundness), investigating policy languages that preserve
governance decidability under composition, and evaluating the
performance overhead of intent mediation in latency-sensitive
workloads.

\bibliographystyle{plainnat}
\bibliography{intent-driven-computing-references-arxiv}

\end{document}